\date{} 
\begin{document}

\centerline{} 

\centerline{\Large{\bf Scheduling a Rescue}} 
\centerline{} 

\centerline{\bf {Nian LIU}} 

\centerline{} 

\centerline{Department of Mathematics and Statistics} 

\centerline{Central South University} 

\centerline{Changsha, China} 

\centerline{} 

\centerline{\bf {Myron HLYNKA}} 

\centerline{} 

\centerline{Department of Mathematics and Statistics} 

\centerline{University of Windsor} 

\centerline{Windsor, Ontario, Canada N9B 3P4} 

\newtheorem{Theorem}{\quad Theorem}[section] 
\newtheorem{Definition}[Theorem]{\quad Definition} 
\newtheorem{Corollary}[Theorem]{\quad Corollary} 
\newtheorem{Lemma}[Theorem]{\quad Lemma} 
\newtheorem{Example}[Theorem]{\quad Example} 

\centerline{}

\begin{abstract}Scheduling service order, in a very specific queueing/inventory model with perishable inventory, is considered. Different strategies  are discusses and results are applied to the tragic cave situation in Thailand in June and July of 2018. 
\end{abstract} 

{\bf AMS Subject Classification:} 68M20, 90B36 \\ 

{\bf Keywords:} queueing,  perishable items,  scheduling, service order
\section{Introduction} 

The order in which customers/items are served/processed  is a common problem in queueing theory (see \cite{Gro}). 
See Down et al. (\cite{Dow}) for queues with abandonment, and Raviv and Leshem for queues with deadlines. 
For scheduling, see Conway et al. (\cite{Con}) for discussion of RANDOM (Random), SPT (shortest processing time), 
MWKR (most work remaining), LWKR (least work remaining) orders of service.  
Nahmias (\cite{Nah}) discussed perishable inventory including issues of out-dating. Survival analysis is a standard topic (\cite{Hos})
in actuarial modeling.

In June and July of 2018, a group of thirteen persons (twelve boys and their coach) on a soccer team were trapped in flooded caves in Thailand. Their rescue attracted world attention.
One statement that initially appeared in the media stated that rescuers chose to take the strongest boys first. This was argued to give the best chance of survival.
Later statements said that the weaker boys actually were removed first. In the end, all thirteen people were rescued, but tragically, one of the rescuers did not survive. 
In this paper, we consider different models and criteria for which removing the stronger persons first may or may not be the best strategy.
The models to be presented do not fit precisely into standard scheduling or queueing or perishable inventory or survival analysis type models, and the objective here is different than in other settings. 

\section{Details }

In the actual Thailand situation, the people were rescued in batches (4+4+5). For our analysis, we will change the setting so that we have $n$ items, and they are processed one at a time. The items are labeled $1,2,\dots,n$ in the order of processing that is chosen (which can be changed as one chooses). Assume that at time 0, the $n$ items have probabilities of success $P_0(i), i=1,2,\dots,n$. (For the Thailand situation, we have $n=13$). 
Let $P_0=(P_0(1),\dots,P_0(n))$. 

Two possible measures of success are (a) the expected number of successfully processed items (expected number of rescued people) (b) the probability that all items are successfully processed (probability that all people are rescued).

First we assume that the initial probabilities of success do not change over time. 
Let $X_i=1$  if item $i$ is successfully processed and $X_i=0$ if item $i$ is not successfully processed.  Let $Y=X_1+\dots +X_{n}$ be the total number of successes among the $n$ items.  
Thus we are dealing with a generalization of the binomial distribution with differing $p$'s. This topic has been studied extensively (see \cite{Bol}, \cite{Dar}, \cite{Hoe}, \cite{Sam}).  
For a generalized binomial distribution with $n$ independent trials with success probabilities $(p_1,\dots,p_n)$, let $X_i=$ the success probability of item $i$, and let $Y=$ total number of successes. Then the probability of exactly $k$ successes in the $n$ trials is the coefficient of $z^k$ in the expansion of $\prod_{i=1}^n  (1-p_i+p_i z)$. 

\begin{Theorem}
For the generalized binomial model, both $E(Y)$ and \\
$P(\text{all successes})$ are constant regardless of the order in which items occur. 
\end{Theorem}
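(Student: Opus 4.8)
The plan is to show that each of the two success measures reduces to a quantity that depends only on the unordered collection of probabilities $\{p_1,\dots,p_n\}$, and therefore cannot change when the processing order is permuted. Since we are in the regime where the success probabilities are fixed in time, relabelling the items to reflect a different order of processing amounts to nothing more than permuting the indices on $(p_1,\dots,p_n)$; it suffices to check that both measures are invariant under such a permutation.

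First I would handle $E(Y)$ by linearity of expectation. Writing $Y=X_1+\cdots+X_n$ with $E(X_i)=p_i$, we obtain $E(Y)=\sum_{i=1}^n p_i$. A sum is commutative, so this value is unaffected by any permutation of the indices; notice that this step does not even require independence of the $X_i$.

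Next I would handle $P(\text{all successes})$ using independence. Since the trials are independent, the event that every item succeeds has probability $P(X_1=1,\dots,X_n=1)=\prod_{i=1}^n p_i$. A product is likewise commutative, so this value too is permutation-invariant. Equivalently, one may read both facts off the generating function $\prod_{i=1}^n (1-p_i+p_i z)$ supplied above: $E(Y)$ is its derivative at $z=1$ and $P(\text{all successes})$ is the coefficient of $z^n$, namely $\prod_{i=1}^n p_i$, and the generating function itself is manifestly invariant under reordering of its factors.

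There is no real obstacle here; the entire content of the statement is the commutativity of addition and multiplication. The point worth flagging is rather conceptual: the theorem shows that when the probabilities are \emph{static}, order is irrelevant for both criteria, which is precisely why the interesting modelling begins only once one allows the success probabilities to degrade over time.
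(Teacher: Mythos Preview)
Your argument is correct and matches the paper's own proof essentially line for line: compute $E(Y)=\sum_i p_i$ by linearity and $P(\text{all successes})=\prod_i p_i$ by independence, then invoke commutativity. The generating-function remark is a pleasant extra but not a different route.
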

\begin{proof}
Note that $E(X_i)=0p_i+1(1-p_i)=p_i$. So \\
\begin{equation*}
E(Y)=\sum_{i=1}^n E(X_i)=\sum_{i=1}^n p_i \text{ and } P(\text{all successes})=\prod_{i=1}^n p_i.
\end{equation*}
Since the expressions for $E(Y)$ and $P(\text{all successes})$ do not change when we change the order of the $p_i$, the result follows. 
\end{proof}
In the context of the rescue problem, the order of rescuing the 13 people does not matter as far as the expected number of successes or the probability of complete success is concerned, if the probabilities of success do not change over time. 

It is more reasonable, however, to assume that as time goes by, the probabilities change  from the original vector $P_0$. Let  $0=t_1<t_2<\dots<t_n$ be the times of start of processing for the $i$-th ($i=1,2,\dots, n$) item. Then $T_i=t_{i+1}-t_i$ ($i=1,\dots,n$) is the interval between individual processing time starts. 
The probability of success of processing the $i$-th item is originally $P_0(i)$ but becomes $P_1(t_i)$ (to be specified) at the time that the $i$-th item is processed because of the delay to begin processing. Let $P_1=(P_1(t_1), P_1(t_2),\dots, P_1(t_n))$. So
\begin{align*}
&E(\text{number of successes})=\sum_{i=1}^n E(X_i)=\sum_{i=1}^n P_1(t_i) \\
&P(\text{all successes})=\prod_{i=1}^n P_1(t_i).
\end{align*}

\subsection{Additive Model}
Assume that the interval between the processing start time of consecutive items is a constant value $T$. Then the ordered processing start times for the $n$ items are $(0, T, 2T, \dots, (n-1)T)=(t_1,t_2,\dots,t_{n})$ so $t_i=(i-1)T$. We assume an additive model for the values $P_1(t_i)$. Use the notation $x^+=max(x,0)$. Specifically, we assume $P_1(t_i)=\max\{P_0(i)+f(t_i),0\} \equiv (P_0(i)+f(t_i))^+$, where the function $f$ is chosen to take negative values ($f(t)<0$), and we assume $f'(t)\leq 0$. The rationale for this assumption is that the probability of success decreases for every item over time.  

\begin{Theorem}
If $P_0(i)+f(t_i) >0$, $\forall i$, then 
\begin{align}
E(\# successes)&= \sum_{i=1}^{n} P_0(i) +\sum_{i=1}^{n} f((i-1)T) \\
P(\text{all successes}) &= \prod_{i=1}^{n} (P_0(i)+f((i-1)T))
\end{align}
\end{Theorem}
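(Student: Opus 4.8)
The plan is to reduce everything to the two general formulas for $E(\text{number of successes})$ and $P(\text{all successes})$ already derived in the excerpt, namely $E=\sum_{i=1}^n P_1(t_i)$ and $P=\prod_{i=1}^n P_1(t_i)$. The only real work is to simplify the factor $P_1(t_i)$ under the stated hypothesis, after which both claims follow by direct substitution.

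First I would observe that the hypothesis $P_0(i)+f(t_i)>0$ for all $i$ makes the positive-part truncation inactive. Since $(x)^+=x$ whenever $x>0$, we have
\begin{equation*}
P_1(t_i)=\bigl(P_0(i)+f(t_i)\bigr)^+=P_0(i)+f(t_i),
\end{equation*}
so no probability has yet been driven down to zero and the $\max$ may be dropped entirely. This is the single substantive step: it removes the only nonlinearity in the model.

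Next I would substitute the explicit equally spaced start times $t_i=(i-1)T$ that define the additive model, giving $P_1(t_i)=P_0(i)+f((i-1)T)$. Inserting this into the two general formulas and using linearity to split the sum yields $E(\#\,\text{successes})=\sum_{i=1}^n P_0(i)+\sum_{i=1}^n f((i-1)T)$ and $P(\text{all successes})=\prod_{i=1}^n (P_0(i)+f((i-1)T))$, which are exactly the asserted identities.

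I do not expect any genuine obstacle here. The content of the theorem lies wholly in recognizing that the positivity hypothesis neutralizes the $\max$; everything after that is substitution of $t_i=(i-1)T$ together with linearity of expectation for the first formula and independence for the product in the second. It is worth flagging in the write-up that the hypothesis $P_0(i)+f(t_i)>0$ is doing essential work, since without it the truncated terms would force clumsier case analysis and the clean closed forms would fail.
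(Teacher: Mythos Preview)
Your proposal is correct and matches the paper's own proof essentially line for line: start from $E=\sum P_1(t_i)$ and $P=\prod P_1(t_i)$, drop the $(\cdot)^+$ using the positivity hypothesis, substitute $t_i=(i-1)T$, and split the sum by linearity. There is nothing to add.
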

\begin{proof} 
\begin{align}
E(\# successes) &= \sum_{i=1}^{n} P_1(t_i) \notag \\
&= \sum_{i=1}^{n} (P_0(i)+f(t_i))^+= \sum_{i=1}^{n} (P_0(i)+f(t_i)) \notag\\
&= \sum_{i=1}^{n} P_0(i) +\sum_{i=1}^{n} f((i-1)T) \notag\\
P(\text{all successes}) &= \prod_{i=1}^{n} P(\text{i-th item is success})= \prod_{i=1}^{n} P_1(i)\notag\\
&= \prod_{i=1}^{n} (P_0(i)+f((i-1)T))\notag
\end{align}
\end{proof}
\begin{Corollary} The expected  number of successes is independent of the order of service. 
\end{Corollary}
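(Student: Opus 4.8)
The plan is to make precise what ``changing the order of service'' means and then to exploit the additive separation that the preceding theorem already exposes. A reordering of the $n$ items is a permutation $\sigma$ of $\{1,\dots,n\}$: the item placed in slot $i$ becomes item $\sigma(i)$, so its initial success probability is $P_0(\sigma(i))$, while the processing-start time of slot $i$ stays $t_i=(i-1)T$. The crucial observation is that the delay $f(t_i)=f((i-1)T)$ is attached to the \emph{slot}, not to the item, and hence is untouched by $\sigma$.

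First I would invoke the positivity hypothesis $P_0(i)+f(t_i)>0$ so that each truncation $(\cdot)^+$ is inactive, which gives
\begin{equation*}
E(\# successes)=\sum_{i=1}^{n}\bigl(P_0(\sigma(i))+f((i-1)T)\bigr)=\sum_{i=1}^{n}P_0(\sigma(i))+\sum_{i=1}^{n}f((i-1)T).
\end{equation*}
Then I would treat the two sums separately. The first is the fixed multiset $\{P_0(1),\dots,P_0(n)\}$ summed in whatever order $\sigma$ dictates, so by commutativity of addition it equals $\sum_{i=1}^{n}P_0(i)$ for every $\sigma$; the second contains no $\sigma$ at all, the slot times being fixed. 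Hence $E(\# successes)$ returns the value $\sum_{i=1}^{n}P_0(i)+\sum_{i=1}^{n}f((i-1)T)$ for every ordering, which is the claim.

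The step that deserves care --- and the only place the argument can break --- is the suppression of $(\cdot)^+$. The map $x\mapsto x^+$ is nonlinear, so if some $P_0(\sigma(i))+f((i-1)T)$ were negative the truncation would clip it to $0$, the sum would no longer separate additively, and different orderings could yield different expectations. I would therefore present the corollary explicitly as a statement under the theorem's positivity hypothesis rather than an unconditional one. It is worth noting in the same breath why the corollary speaks only of $E(\# successes)$: the expectation is a \emph{sum}, which cleanly separates the item-dependent terms $P_0(\sigma(i))$ from the slot-dependent terms $f((i-1)T)$, whereas $P(\text{all successes})=\prod_{i=1}^{n}\bigl(P_0(\sigma(i))+f((i-1)T)\bigr)$ couples the two and is in general \emph{not} order-independent.
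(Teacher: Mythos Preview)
Your proof is correct and follows essentially the same route as the paper: invoke the theorem's additive formula $E(\# successes)=\sum_{i=1}^{n}P_0(i)+\sum_{i=1}^{n}f((i-1)T)$ and observe that neither sum depends on the ordering. You are more explicit than the paper in introducing the permutation $\sigma$ and in flagging that the positivity hypothesis is what lets the truncation $(\cdot)^+$ disappear, but the underlying argument is identical.
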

\begin{proof}
\begin{equation*}
E(\# successes) = \sum_{i=1}^{n} P_0(i) +\sum_{i=1}^n f((i-1)T). 
\end{equation*}
This expression does not depend on the order of service. \end{proof}

Thus, in terms of expected number of successes, the order of the service does not matter. 
However the expression $P(\text{all successes})$ does depend on the order of service. Recall the arithmetic/geometric mean inequality. 
``If numbers $x_1,\dots,x_n$ are not all the same, the geometric mean of these numbers is less than their arithmetic mean.''(\cite{Kor}) Equality occurs only when the $x_i$ are all equal. So, with $\sum_{i=1}^{n} P_0(i)+f((i-1)T)$ fixed, the product of $n$ terms would be largest when terms are closest to each other.
In terms of the Thailand rescue, the probability that all 13 people survive would be the maximized if the weakest person among the remaining is saved first. But the expected number of people successfully rescued stays the same regardless of the order chosen.  Thus, we would likely choose to rescue the weaker people first in order to satisfy objective (b). 

As a simple example with $n=4$, take $P_0=(.8,.9,.7,.7)$  and $f(x)=-.1x/T$ so $\{f(t_i)\}=\{f((i-1)T)\}=\{-.1(i-1)\}=\{0,-.1,-.2,-.3\}$. Thus $P_1=(P_1(t_1), P_1(t_2),P_1(t_3),P_1(t_4))=(.8-0,.9-.1,.7-.2,.7-.3)=(.8,.8,.5,.4)$.\\
So  $E(\# successes)=  \sum_{i=1}^{n} P_1(t_i)=.8+.8+.5+.4 = 2.5$ and \\
$P(\text{all successes}) =  \prod_{i=1}^{n} P_1(t_i)=.8(.8)(.5)(.4) = .128$. \\
If we change the order  of $P_0$ to (.7,.7,.8,.9) but leave $f$ in its current form, we obtain $E(\# successes) = 2.5$ and $P(\text{all successes})=.1512$.\\
If we change the order of $P_0$ to (.9,.8,.7,.7) but leave $f$ in its current form, we obtain    $E(\# successes) = 2.5$ and $P(\text{all successes})=.126$. We see that using an increasing order in $P_0$ gives the optimal (highest) product. 

Next we consider the case when there exists an ordering such that for some $i$, we have $P_0(i)+f(t_i) < 0$.
\begin{Theorem} If for some $i$ we have $P_0(i)+f(t_i) < 0$, then
\begin{align}
E(\# successes) &= \sum_{i=1}^{n} P_1(t_i)= \sum_{i=1}^{n} (P_0(i)+f(t_i))^+ \\
P(all successes) &=  \prod_{i=1}^{n} P_1(t_i)= \prod_{i=1}^{n} (P_0(i)+f(t_i))^+ = 0
\end{align}
\end{Theorem}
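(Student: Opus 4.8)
The plan is to treat the two assertions separately, since the expectation formula is a direct consequence of linearity while the product formula rests on a single vanishing factor. The positive-part operator $x^+=\max(x,0)$, introduced in the additive model, is the object to keep front and center throughout.

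First I would establish the expression for $E(\# successes)$. Exactly as in the computation of the earlier theorems, we have $E(X_i)=P_1(t_i)$, so linearity of expectation gives
\begin{equation*}
E(\# successes)=\sum_{i=1}^n E(X_i)=\sum_{i=1}^n P_1(t_i)=\sum_{i=1}^n (P_0(i)+f(t_i))^+.
\end{equation*}
The difference from the previous (all-positive) theorem is that here I would deliberately \emph{not} drop the positive-part operator: the hypothesis guarantees that at least one summand has a negative argument and is therefore clamped to zero, and the $(\cdot)^+$ is precisely what records this clamping. Note also that each factor is a legitimate probability, since $P_0(i)+f(t_i)\le P_0(i)\le 1$ forces $(P_0(i)+f(t_i))^+\le 1$ as well.

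Next I would handle $P(\text{all successes})$. By independence of the outcomes $X_1,\dots,X_n$,
\begin{equation*}
P(\text{all successes})=\prod_{i=1}^n P_1(t_i)=\prod_{i=1}^n (P_0(i)+f(t_i))^+.
\end{equation*}
The key step is to invoke the hypothesis directly: there exists an index $i^\ast$ with $P_0(i^\ast)+f(t_{i^\ast})<0$, and by the definition of the positive part the corresponding factor satisfies $(P_0(i^\ast)+f(t_{i^\ast}))^+=\max\{P_0(i^\ast)+f(t_{i^\ast}),0\}=0$. Since a product containing a zero factor is itself zero, I conclude $P(\text{all successes})=0$.

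I do not anticipate a genuine obstacle, as both claims become nearly definitional once the positive-part operator is taken seriously; the only work is conceptual rather than technical. The point worth emphasizing is the contrast that makes the statement interesting: a single item whose adjusted probability has decayed to zero forces the probability of a \emph{complete} rescue to vanish, whereas the expected number rescued, being a sum rather than a product, may remain strictly positive. That asymmetry between objectives (a) and (b) is the substantive content the theorem is meant to expose.
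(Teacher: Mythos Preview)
Your argument is correct. The paper in fact states this theorem without any proof at all; your proposal supplies exactly the natural justification---linearity of expectation for the sum, independence for the product, and the observation that the hypothesis forces one factor $(P_0(i^\ast)+f(t_{i^\ast}))^+$ to vanish---so there is nothing to compare beyond noting that you have filled in what the paper left implicit.
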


We change our earlier example and choose $P_0=(.8,.9,.1,.2)$, but leave $f$ as before with $\{f(t_i)\}=\{f((i-1)T)\}=\{-.1(i-1)\}=\{0,-.1,-.2,-.3\}$.
Then  $\{P_1(i)\}=(.8-0,.9-.1,.1-.2,.2-.3)=(.8,.8,(-.1)^+,(-.1)^+.4)= (.8,.8,0,0)$\\
So  $E(\# successes)=  \sum_{i=1}^{n} P_1(t_i)=.8+.8+0+0 = 1.6$ and \\
$P(\text{all successes}) =  \prod_{i=1}^{n} P_1(t_i)=.8(.8)(0)(0) = 0$. \\
If we change the order  of $P_0$ to (.1,.2,.8,.9) but leave $f$ in its current form, we obtain $E(\# successes) = .1+.1+.6+.6=1.4$ and $P(\text{all successes})=.1(.1)(.6)(.6)= .0036$.\\
If we change the order of $P_0$ to (.9,.8,.2,.1) but leave $f$ in its current form, we obtain  $E(\# successes) = .9+.7+0+0=1.6$ and $P(\text{all successes})=.9(.7)(0)(0)= 0$.\\
We see that we maximize $E(\# successes)$ by choosing $P_0$ to have values in decreasing order but we maximize $P(\text{all successes})$ by choosing $P_0$ components to be in increasing order. 
In terms of the Thailand rescue, if for some $i$ we have $P_0(i)+f(t_i) < 0$, then we achieve a higher expected number of people rescued, if we rescue the stronger people first. This requires us to have a good estimate of the initial probabilities. However, we have a larger value of $P(\text{all successes})$ if we rescue the weaker people first. 

We can illustrate this as follows. We create a $13\times 13$ matrix. Each column represents a person. The rows represent time and the $(i,j)$ entry represents the probability of success if person $j$ is rescued on the $i$th step (at time $t_i$). For row 1, we generate 13 random values  uniformly on (.5,1) to represent the probabilities $P_0(j)$of success  for person $j$ at time 0. The people are labeled so that each row is in increasing order, with the weaker person (lowest success probability) is in column 1 and the strongest person is in column 13. To get the later rows, we use the function $f$, described earlier. This time, we choose $f(x)=-0.06x/T$. Then the $(2,j)$ element in row 2 is given by \\
$(P_0(j)-.06(2-1))^+$, and the (3,j) element of row 3  consists of entry\\ $(P_0(j)-.06(3-1))^+$, and so on.   
The probabilities are presented in a matrix shown in Figure \ref{color_matrix}, where the square at $i$-th row and $j$-th column corresponds to the $j$-th person's probability of survival if he is saved at the $i$-th stage. The bigger the square is and the darker its color, the bigger the probability.

\begin{figure}[htb]
\centering
\includegraphics[scale=0.5]{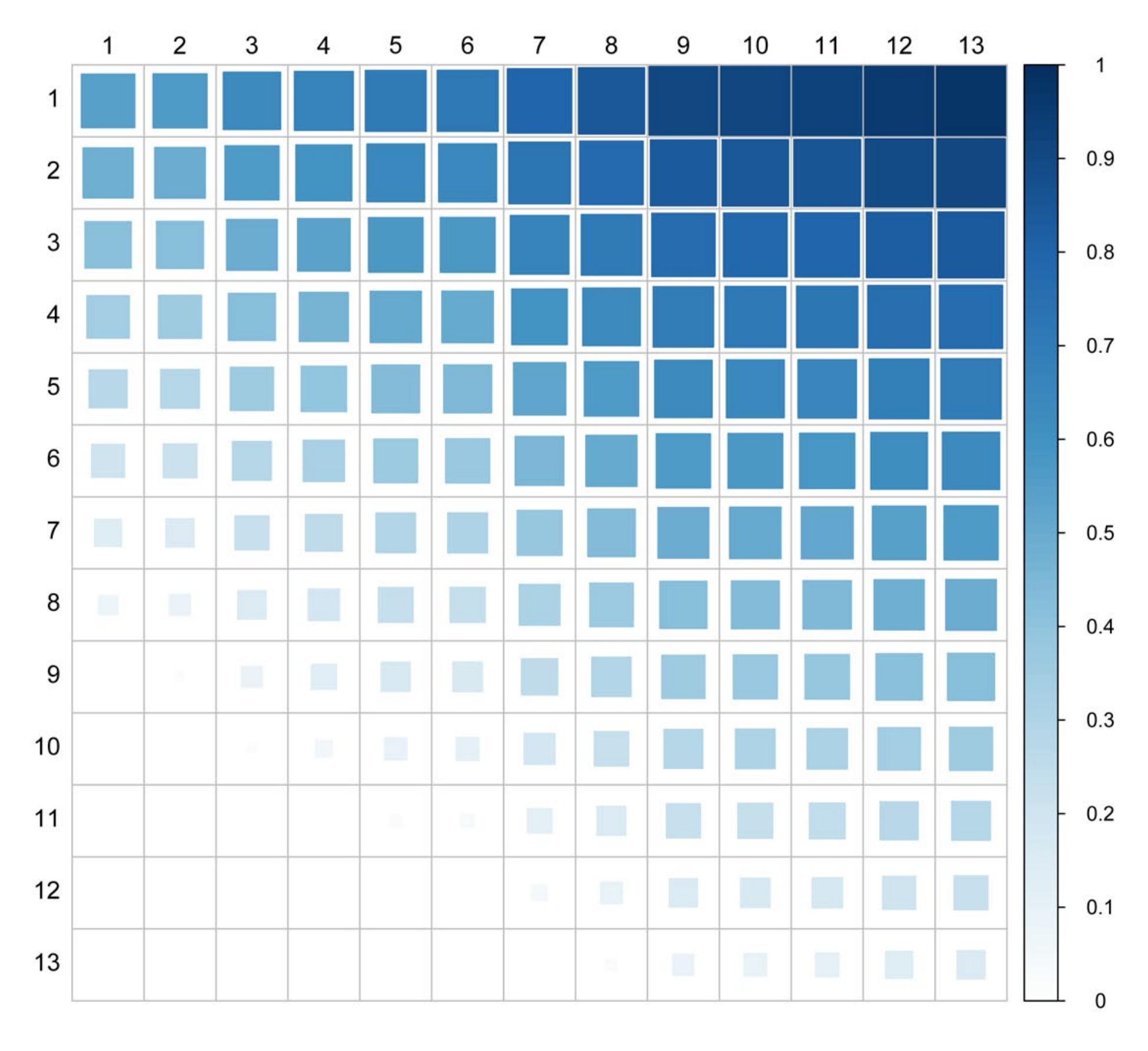}
\caption{Prob that customer $j$ survives if served at stage $i$}
\label{color_matrix}
\end{figure}

See Figure \ref{color_matrix}. We must choose one entry from each row and each column to indicate the order that the rescue attempts are made. If we always choose the weakest person to be taken earlier, then we are using the diagonal of the matrix from upper left to lower right. 
If we always choose the strongest person to be taken earlier, then we are using the diagonal from lower left to upper right. The product (and hence $P(\text{all survive})$) is maximized if the colors are closer to each other. It seems clear that the product of all entries for the case of stronger first strategy will give a value nearer to zero than the weaker first strategy because of the almost white values in the lower
left corner. So we maximize $P(\text{all survive})$ by rescuing the weaker people first. However, it is not so clear as to the optimal order of rescue in terms of the expected values since we must sum the values rather than multiply them. 

\section {Evaluating the model}

Our matrix in Figure \ref{color_matrix} was obtained by choosing the probabilities of success i.i.d. from $Unif(.5,1)$ and that these probabilities drop by $.06$  to a minimum of 0
(after each of the first 12 stages, so $f(x)=-.06x/T$. For such a model, define \\
$\pi_s = P(\text{all successes with strongest first}) $ and \\$\pi_w = P(\text{all successes with weakest first})$.\\
 It is possible that the values that we generate by simulation will result in 
$\pi_s=0$ or $\pi_s>0$ (similarly for $\pi_w$). As an evaluation of the model, we next compute $P(\pi_s>0)$ and $P(\pi_w>0)$, if we were to simulate the values. 

First consider $P(\pi_s>0)$.  For our generator, let $P_0$ be the vector of initial probabilities generated. Let $P_1=(P_1(t_1),\dots,P_1(t_n))$ be the vector of success probabilities for the items $(1,\dots,n)$ at the processing start time for those items. Then $\pi_s=0$  iff $\exists k, s.t.\, P_1(t_k)=0$. As earlier, take $t_i=(i-1)T$.  
\begin{align}
P(\pi_s >0) &=P(\min \{P_1(t_k)\} >0) = P(\min \{P_0(k)\} +f(t_{13})>0)\notag\\
&= P(\min \{P_0(k)\} +f(12T))>0)= P(\min \{P_0(k)\} -.72>0) 
\notag \\
&=\prod\limits_{k=1}^{13}P(P_0(k) > .72)=(\frac{1-.72}{.5})^{13}=.000533 
\end{align}

If the weak ones were saved first, then the initial cases (all with $P_0(i)>.5$) would all have positive probabilities and the only cases with potentially zero probabilities generated would be
 the largest four values which must be larger than .06*.9, .06*10, .06*11,.6*12 respectively. The probability of such values being generated can be found frm the joint distribution of the largest four order statistics of values generated on Unif(.5,1). 
 If we call these order statistics $t,u,v,w$ (from smaller to larger), then the joint distribution of the largest four order statistics is $g(t,u,v,w) = \dfrac{13!}{9!1!1!1!1!}(2t-1)^9 2^4.$ 
 This would be computed as

\begin{align*}
 P(\pi_w >0) &= P(\bigcap_{k=1}^{13} \{ P_0(k) > -f(T(k-1))\}  \\
&=\int_{.72}^1\int_{.66}^w\int_{.6}^v\int_{.54}^u  g(t,u,v,w) dt\, du\, dv\, dw=.9999677
\end{align*}

Thus, for the model that we are using to generate our values, if we are most concerned with $P(\text{all successes})$, then it is best to processes highest probability of success items first (i.e. save the stronger boys first).

\section {Mutiplicative Model}

In the previous analysis, we assumed that we had an initial probability vector and that the probabilities decreased over time in an additive manner.
Howver, it might be more reasonable to assume that the decrease in probabilities over time is multiplicative rather than additive. 
i.e. We begin wth $P_0=(P_0(1),\dots,P_0(n)$ as the probabilities of success for $n$ items listed in order of processing. 
Let $p\in (0,1)$ be the multiplciative factor resulting in the updated probability. After the first item is processed, the probabilites of the remaining $n-1$ items become $(P_0(2)*p,\dots, P_0(n)*p)$, etc.
At the time of processing, the vector of probabilites of success is \\
$P_1=(P_1(1),P_1(2),\dots,P_1(n))=(P_0(1),P_0(2)*p,\dots,,P_0*p^n)$.

\begin{Theorem} For initial success vector  $P_0=(P_0(1),\dots,P_0(n))$and the multiplicative factor $p$, 
\begin{align}
E(\# successes)&=\sum_{i=1}^n P_0(i)p^i\\
P(\text{all successes})&=\prod_{i=1}^n P_0(i)p^{(n-1)n/2}.
\end{align}
\end{Theorem}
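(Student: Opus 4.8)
The plan is to mirror the argument used for the additive model (Theorem 2.2), substituting the multiplicative update rule for the probabilities at processing time. First I would record the processing-time probabilities explicitly: since each already-processed item discounts the remaining ones by the factor $p$, the $i$-th item (in the chosen order) is processed with probability $P_1(i) = P_0(i)\, p^{\,i-1}$, the exponent $i-1$ counting the number of items processed before it. This is the only model-specific input; everything else follows from the two general identities $E(\#\text{successes}) = \sum_i P_1(i)$ and $P(\text{all successes}) = \prod_i P_1(i)$ already established at the opening of Section~2.

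For the expectation, I would invoke linearity of expectation: each $X_i$ is Bernoulli with success parameter $P_1(i)$, so $E(X_i)=P_1(i)$ and hence $E(\#\text{successes}) = \sum_{i=1}^n P_0(i)\, p^{\,i-1}$. For the all-successes probability, I would use independence of the $X_i$ to write the joint success probability as the product $\prod_{i=1}^n P_1(i) = \prod_{i=1}^n P_0(i)\, p^{\,i-1}$, and then split this into $\bigl(\prod_{i=1}^n P_0(i)\bigr)\cdot p^{\sum_{i=1}^n (i-1)}$ by collecting the powers of $p$ separately from the initial probabilities.

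The one genuine computation is the exponent of $p$ in the second formula. Summing the arithmetic series $\sum_{i=1}^n (i-1) = 0 + 1 + \cdots + (n-1) = \tfrac{(n-1)n}{2}$ collapses the product of powers into the single factor $p^{(n-1)n/2}$, giving $P(\text{all successes}) = \bigl(\prod_{i=1}^n P_0(i)\bigr)\, p^{(n-1)n/2}$, exactly as stated. There is no deep obstacle here, so the main thing to guard against is bookkeeping on the exponent: it is easy to slip between the $p^{\,i-1}$ and $p^{\,i}$ conventions depending on whether the first item is counted as discounted at time $0$. I would therefore fix the convention at the very start from the displayed definition of $P_1$ and carry it consistently through both the sum and the product, checking that the sum-of-exponents and the closed form $\tfrac{(n-1)n}{2}$ agree at a small case such as $n=4$ before declaring the result.
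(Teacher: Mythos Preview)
Your proposal is correct and follows essentially the same approach as the paper: write down $P_1(i)$ explicitly, sum for the expectation, multiply for the all-successes probability, and collapse the arithmetic series of exponents. In fact you are more careful than the paper about the $p^{\,i-1}$ versus $p^{\,i}$ convention; note that the paper's stated expectation $\sum_i P_0(i)p^{i}$ and its stated exponent $(n-1)n/2$ in the product are mutually inconsistent, and your choice $P_1(i)=P_0(i)p^{\,i-1}$ is the one that makes the $(n-1)n/2$ exponent come out right.
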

\begin{proof}
\begin{align*}
E(\# successes)&=\sum_{i=1}^n P_1(i)=\sum_{i=1}^n P_0(i)p^i\\
P(\text{all successes})&=\prod{i=1}^n P_1(i)=\prod_{i=1}^n P_0(i)P^i=\prod_{i=1}^n P_0(i)p^{(n-1)n/2}.
\end{align*}\end{proof}

\begin{Corollary}
In the multiplicative model, $P(\text{all successes})$ is independent of the order of processing.
\end{Corollary}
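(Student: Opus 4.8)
The plan is to read the result straight off the formula established in the preceding Theorem, namely
\[
P(\text{all successes}) = \left(\prod_{i=1}^{n} P_0(i)\right) p^{(n-1)n/2},
\]
and to observe that it has already been written as a product of two factors, neither of which can possibly depend on the order of processing. The first factor is the product of \emph{all} the initial probabilities, and the second is a power of $p$ whose exponent is a function of $n$ alone.

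First I would make the role of the ordering explicit. A choice of processing order is a permutation $\sigma$ of $\{1,\dots,n\}$, so that the item occupying slot $i$ is item $\sigma(i)$; since it has been discounted once for each of the $i-1$ slots preceding it, its success probability at the moment it is processed is $P_0(\sigma(i))\,p^{\,i-1}$. Multiplying over all slots gives
\[
P(\text{all successes}) = \prod_{i=1}^{n} P_0(\sigma(i))\,p^{\,i-1} = \left(\prod_{i=1}^{n} P_0(\sigma(i))\right) p^{\sum_{i=1}^{n} (i-1)}.
\]
I would then treat the two factors separately. For the first, because $\sigma$ is a bijection the factors $P_0(\sigma(1)),\dots,P_0(\sigma(n))$ are exactly $P_0(1),\dots,P_0(n)$ in some order, and multiplication is commutative, so $\prod_{i=1}^{n} P_0(\sigma(i)) = \prod_{j=1}^{n} P_0(j)$ independently of $\sigma$. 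For the exponent, the sum $\sum_{i=1}^{n}(i-1) = (n-1)n/2$ makes no reference to $\sigma$ whatsoever: every ordering accumulates precisely the same total number of discount factors. Hence the displayed expression equals $\bigl(\prod_{j=1}^{n} P_0(j)\bigr)p^{(n-1)n/2}$ for every $\sigma$, which is exactly the claim.

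I do not expect a genuine obstacle here: the corollary is an immediate consequence of the commutativity of multiplication together with the fact that the total power of $p$ is determined by $n$ alone. The only point requiring any care is conceptual rather than computational, namely to confirm that reordering merely permutes which initial probability is paired with which power of $p$, and so changes neither the overall product nor the total exponent. This is precisely the feature that separates $P(\text{all successes})$ from the expected number of successes, where each $P_0(i)$ is weighted by a \emph{distinct} power of $p$ and reordering therefore reshuffles the weights in the sum $\sum_i P_0(i)p^{\,i}$, leaving it order-dependent in general.
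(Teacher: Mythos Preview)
Your proposal is correct and matches the paper's approach: the corollary is presented there without proof, as an immediate consequence of the formula $P(\text{all successes})=\bigl(\prod_{i=1}^n P_0(i)\bigr)p^{(n-1)n/2}$ from the preceding theorem. Your argument simply makes explicit the observation (commutativity of the product, order-free exponent) that the paper leaves implicit.
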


\begin{Corollary}
In the multiplicative model, $E(\# successes)$ is maximized if $P_0$ is sorted so that $P_0(1)\geq \dots \geq P_0(n)$
\end{Corollary}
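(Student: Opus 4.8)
The plan is to read off, from the preceding Theorem, that the expected number of successes is the weighted sum
\begin{equation*}
E(\#\,successes)=\sum_{i=1}^n P_0(i)\,p^{\,i},
\end{equation*}
in which the position weights $w_i = p^{\,i}$ are fixed, while the values $P_0(i)$ may be permuted by our choice of processing order. The central observation is that, because $p\in(0,1)$, the weights form a strictly decreasing sequence $w_1 > w_2 > \cdots > w_n$, so the earliest processing slot carries the heaviest weight. Maximizing such a sum over all permutations of the summed values is precisely the setting of the rearrangement inequality, which asserts that the sum is largest when the values are arranged in the same (here, decreasing) order as the weights. This immediately yields the claim that the maximum is attained by ordering the initial probabilities so that $P_0(1)\geq\cdots\geq P_0(n)$.

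Rather than merely cite the inequality, I would give a short self-contained exchange argument, since it makes the ``stronger first'' conclusion transparent. Suppose some ordering assigns value $a$ to position $i$ and value $b$ to position $j$ with $i<j$, so that $w_i>w_j$. These two slots contribute $w_i a + w_j b$ to the total; after swapping the two values the contribution becomes $w_i b + w_j a$, and the net change is
\begin{equation*}
(w_i b + w_j a)-(w_i a + w_j b)=(w_i-w_j)(b-a).
\end{equation*}
Since $w_i-w_j>0$, this difference is positive exactly when $b>a$, i.e. whenever a larger value occupies a later, lighter-weighted slot. Hence any ordering containing such an inversion can be strictly improved by a single swap, so no ordering with $P_0(i)<P_0(j)$ for some $i<j$ can be optimal.

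The argument terminates because each improving swap strictly increases a finite objective and there are only finitely many orderings; equivalently, the decreasing arrangement $P_0(1)\geq\cdots\geq P_0(n)$ is the unique configuration admitting no improving swap and therefore realizes the maximum. I do not expect a genuine obstacle here: the only point demanding care is the \emph{direction} of the monotonicity --- because $p<1$ the weights decrease with the slot index, so the paired values must decrease with it as well, which is exactly what forces the higher-probability items to be processed first. I would also note that the same conclusion holds verbatim under the slightly different normalization $P_1(i)=P_0(i)\,p^{\,i-1}$ suggested by the model setup, since the weights $p^{\,i-1}$ are again strictly decreasing in $i$.
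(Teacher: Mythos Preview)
Your proposal is correct. The paper's own proof consists of a single sentence --- ``The result can be shown by induction'' --- with no further detail, so you have in fact supplied what the paper omits. Your exchange (adjacent-swap) argument is the standard proof of the rearrangement inequality and is essentially an induction on the number of inversions, so the two approaches are not really in tension; yours is simply the worked-out version. The additional remark that the conclusion is insensitive to whether the weights are $p^{i}$ or $p^{i-1}$ is a helpful clarification, since the paper's Theorem and its verbal description of the model are slightly inconsistent on this point.
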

\begin{proof} The result can be shown by induction. \end{proof}

In the case of the cave rescue, the multiplicative model indicates that the order of rescue does not affect
$P(\text{all successes}$, but that in order to maximize the number of people saved, the stronger people should be rescued first.

\section{Conclusion}
In a rescue situation, one needs to clarify what the goal is. Generally, the logical goal would be to maximize the expcted number of successes.
In the additive model, the order does not matter, unless some of the items have their probabilities drop too much by the delay, in which case 
the higher order items should be processed first. In the multiplicative model, the preferred order is to rescue the higher success probability items first.
This conflicts with our intuition, and it also contradicts our sense of fairness. A seemingly less important goal of maximizing the probability that all items are
successfully processed, results in the opposite preferred ordering.  

\noindent
{\bf Acknowledgements.} 
We acknowledge funding and support  from MITACS Global Internship program, University of Windsor, Central South University, CSC Scholarship.


\begin{thebibliography}{9} 
\bibitem{Bol}
P.J. Boland. The probability distribution for the number of successes in independent trials.
Comm. Statist. Theory Methods {\bf 36} 1327--1331. 2007. 

\bibitem{Con}
R.W. Conway, W.L. Maxwell, and L.W. Miller. \emph{ Theory of Scheduling}, Addison-Wesley, 1967.

\bibitem{Dar}
J.N. Darroch, On the distribution of the number of successes in independent trials. Ann. Math. Statist. {\bf 35} 1317-–1321, 1964.

\bibitem{Dow}
D. Down, G.M. Koole, M. Lewis. Dynamic control of a single-server system with abandonments. Queueing Systems, 67, 63--90. 2011

\bibitem{Gro} 
D. Gross, J.F. Shortle, J.M. Thompson and C.M. Harris. \emph {Fundamentals of Queueing Theory}, Wiley, Hoboken, 2008.

\bibitem{Hoe}
W. Hoeffding, On the distribution of the number of successes in independent trials. Ann. Math. Statist. {\bf 27}  713--721, 1956.

\bibitem{Hos}
D.W. Hosmer, S. Lemeshow, and S. May. \emph{Applied Survival Analysis}, 2nd ed., Wiley Blackwell, 2011. 

\bibitem{Kor} P.P. Korovkin. Inequalities. Blaisdell. 1961. 

\bibitem{Nah}
S. Nahmias. Perishable Inventory Theory: A Review. \emph{ Operations Research}, {\bf 30} 680--708, 1982.

\bibitem{Rav}
Li-on Raviv and Amir Leshem. Maximizing Service Reward for Queues with Deadlines. arXiv:1805.11681. 2018.

\bibitem{Sam}
S.M. Samuels. On the number of successes in independent trials. Ann. Math. Statist. {\bf 36} 1272--1278, 1965. 

\end{thebibliography}
\end{document}